\documentclass{amsart}

\usepackage{amssymb,amsmath,amsthm,amsfonts}

\usepackage[
  paperwidth = 8.5in,   
  paperheight = 11in,
  left   = 1in,         
  right  = 1in,        
  top    = 1in,
  bottom = 1in
]{geometry}

\usepackage{graphicx}
\usepackage{enumitem}
\usepackage{caption}
\usepackage{lmodern}
\usepackage[font=small, labelfont=bf]{subcaption}
\usepackage[font=small, labelfont=bf]{caption}

\usepackage{xcolor}
\definecolor{royalpurple}{RGB}{99, 6, 161} 

\newtheorem{theorem}{Theorem}
\newtheorem{lemma}{Lemma}
\newtheorem{proposition}{Proposition}

\usepackage{algorithm}
\usepackage{algpseudocode}

\usepackage{hyperref}
\hypersetup{
    colorlinks=true,      
    linkcolor=blue,       
    citecolor=royalpurple,       
    urlcolor=blue        
}


\begin{document}

\title[]{An Efficient Genus Algorithm \\ Based on Graph Rotations}

\author[]{Alexander Metzger}
\address{University of Washington}
\email{metzgera@uw.edu}

\author[]{Austin Ulrigg}
\address{University of Washington}
\email{austinul@uw.edu}

\subjclass[2020]{Primary 05C85; Secondary 05C10}
\keywords{Genus of graph, rotation system, genus algorithm.}

\commby{Insert Editor's name} 

\begin{abstract}
    We study the problem of determining the minimal \textit{genus} of a simple finite connected graph. We present an algorithm which, for an arbitrary graph $G$ with $n$ vertices and $m$ edges, determines the orientable genus of $G$ in  $\mathcal{O}(n(4^m/n)^{n/t})$ steps where $t$ is the \textit{girth} of $G$. This algorithm avoids difficulties that many other genus algorithms have with handling bridge placements which is a well-known issue \cite{myrvold2011}. Its implementation can be found \href{https://github.com/SanderGi/Genus}{here} under an MIT license. The algorithm has a number of useful properties for practical use: it is simple to implement, it outputs the faces of an optimal embedding, and it iteratively narrows both upper and lower bounds. We illustrate the algorithm by determining the genus of the $(3,12)$ cage (which is 17); other graphs are also considered.
\end{abstract}
\vspace{-3em}
\maketitle
\vspace{-3em}
\section{Introduction}
\subsection{Motivation and Background} Say that you have three houses and three utilities, and you must connect each house to each utility via a wire, is there a way to do this so that none of the wires cross each other? This problem can be reframed in terms of graph theory: is $K_{3,3}$ planar? Kuratowski's theorem \cite{kuratowski1930}  tells us that it is not. However, $K_{3,3}$ is \textit{toroidal}, it can be embedded on a torus without any edges crossing. 

\begin{figure}[H]
    \centering
    \begin{subfigure}[t]{0.4\textwidth}
        \centering
        \includegraphics[height=0.8in]{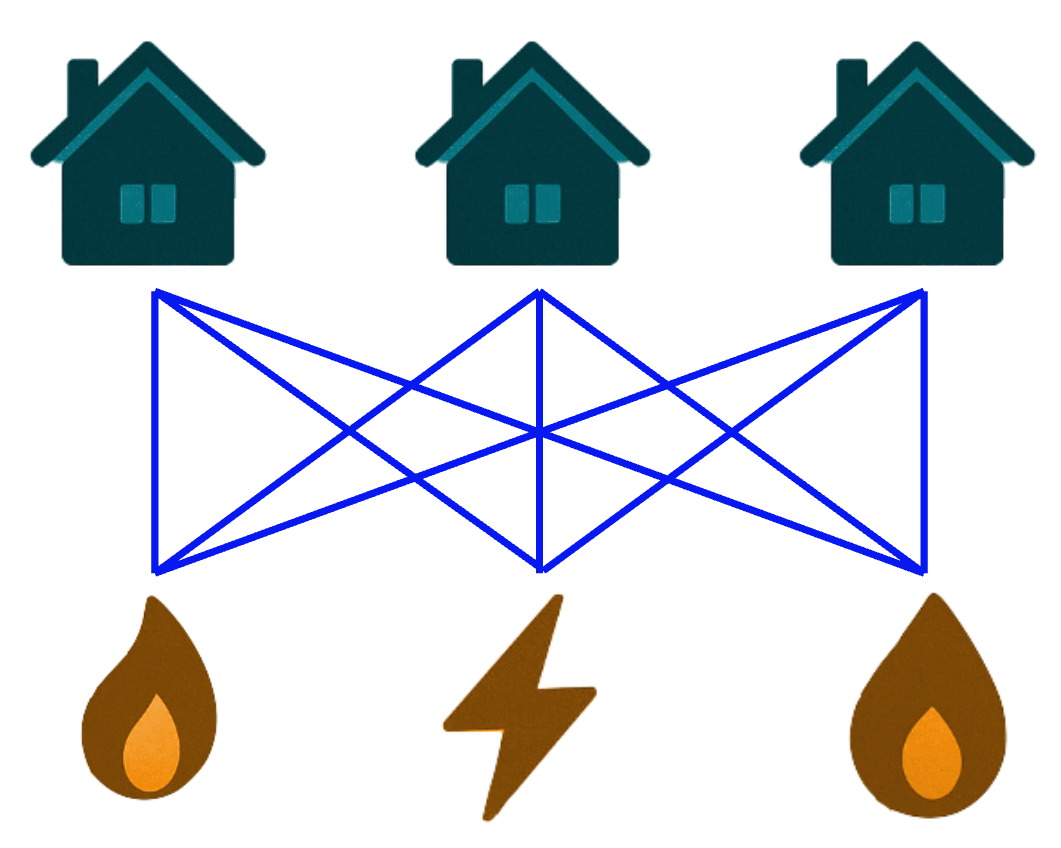}
        \caption{The Complete Bipartite Graph $K_{3,3}$.}
    \end{subfigure}
    \begin{subfigure}[t]{0.4\textwidth}
        \centering
        \includegraphics[height=0.8in]{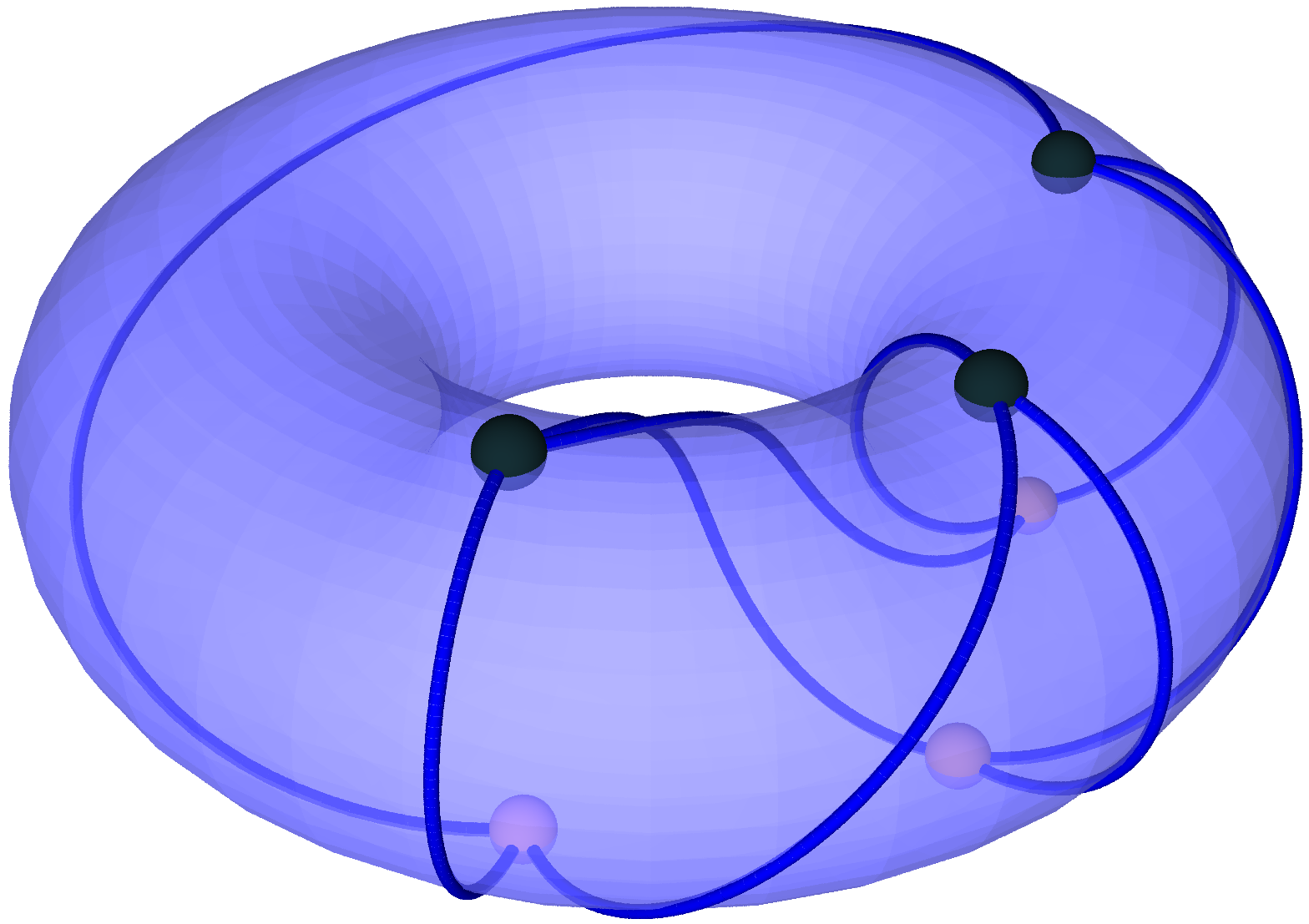}
        \caption{$K_{3,3}$ Embedded on a Torus.}
        \label{fig:k33embedding}
    \end{subfigure}
\end{figure}

The property of a torus that allows us to embed $K_{3,3}$ is that it has a handle (unlike surfaces such as a plane or a sphere). This motivates classifying surfaces by their number of handles, that is, their genus $g$. In these terms, we have seen that the minimum genus surface that $K_{3,3}$ can be embedded has $g=1$, and we say that $K_{3,3}$'s genus is $1$. Although one may also consider embeddings on non-orientable surfaces, the focus of our algorithm in this paper is only on the orientable genus. In the orientable case for genus zero we use the special name ``planar" and for genus one we use ``toroidal". Similarly, it is known that the complete graph with 7 vertices, $K_7$, has genus $1$ and can be embedded on a torus. However, $K_8$ cannot be embedded on a torus, and has genus $2$. In fact, Ringel \cite{ringel1954,ringel1965}, determined the minimum non-orientable genus for the complete graph $K_n$ and also the orientable and non-orientable genus for the complete bipartite graph $K_{m,n}$. Further, Ringels and Youngs later determined the minimum orientable genus for $K_n$ \cite{ringel1968}:
\[
    g(K_n) = \left\lceil \frac{(n-3)(n-4)}{12} \right\rceil \qquad g(K_{m,n}) = \left\lceil \frac{(n-2)(m-2)}{4} \right\rceil
\]
However, it is not always so simple to determine the genus of an arbitrary graph. For example, the following are examples of graphs with unknown genus. 

\begin{figure}[H]
    \centering
    \begin{subfigure}[t]{0.4\textwidth}
        \centering
        \includegraphics[width=0.4\linewidth]{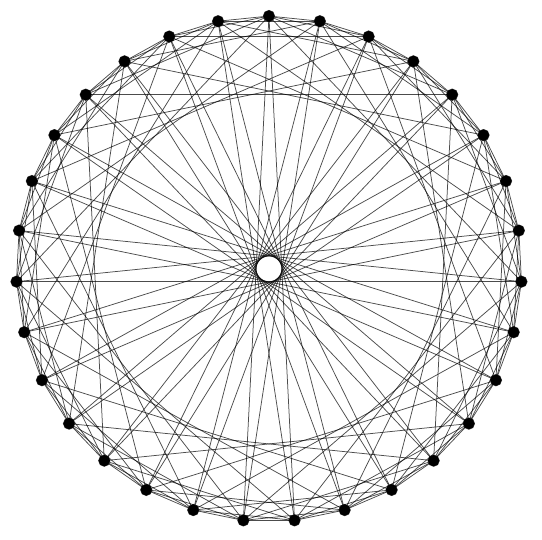}
        \caption{Cyclotomic 31 Graph ($12\leq g \leq 26$).}
    \end{subfigure}
    \begin{subfigure}[t]{0.4\textwidth}
        \centering
        \includegraphics[width=0.4\linewidth]{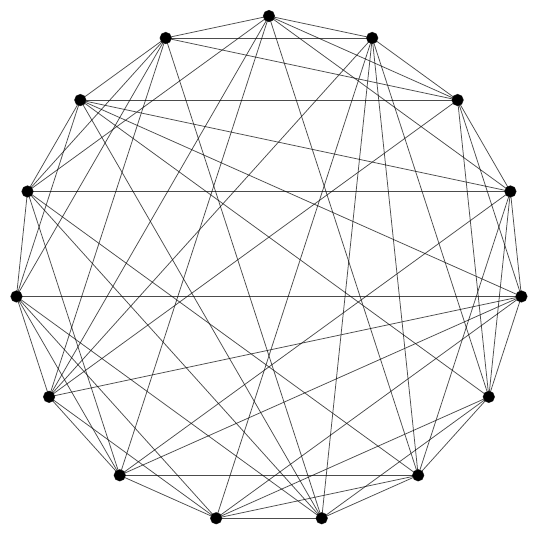}
        \caption{Johnson (6,2) Graph ($4\leq g \leq 5$).} 
    \end{subfigure}
\end{figure}

 The challenge of determining the orientable genus of graphs and constructing their embeddings is a fundamental problem in graph theory, with applications in map colouring, very large scale integration, topology, and network science.

\section{Main Result}
 Throughout this paper $G$ denotes a finite connected simple graph, with $n$ vertices, $m$ edges, and girth $t$. We present a simple algorithm to determine the genus of a graph: \textsc{Practical\_Algorithm\_for\_Graph\_Embedding} (\textsc{PAGE}). The algorithm runs faster than previously implemented algorithms including those presented in \cite{beyer2016, chimani2019, sagemath2024}. \textsc{PAGE} can easily handle graphs like $K_7$ and $K_8$ in a few seconds and scales well to graphs with over a hundred edges, which it can run in a few minutes. The algorithm also provides upper and lower bounds which it iteratively narrows as it runs, allowing for practical applications. 
\begin{theorem}[Main Result]
\textsc{PAGE} described in \S \ref{sec:Construction} determines the genus of $G$ with runtime of $\mathcal{O}(n(4^m/n)^{n/t})$.
\end{theorem}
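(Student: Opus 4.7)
The plan is a two-step analysis of the algorithm presented in Section \ref{sec:proof}: first count the number of combinatorial objects PAGE examines, then bound the polynomial overhead of processing each one. The fundamental fact I would invoke is the classical correspondence between orientable cellular embeddings of a connected multigraph $G$ and rotation systems on $G$ (a cyclic ordering of incident edges at each vertex), together with the observation that for any candidate rotation we can trace faces via the facial permutation and recover the genus from Euler's formula $V - E + F = 2 - 2g$. This reduces the problem to a search over rotation systems whose candidates are certifiable in polynomial time.

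With that framework in place, the main work is bounding the search space. The naive upper bound on the number of rotation systems is $\prod_{v \in V}(d_v - 1)!$, subject to $\sum_v d_v = 2|E| \le n(n-1)$. I would first argue via a standard log-convexity argument on $x \mapsto \log\Gamma(x)$ that this product is maximized when $G = K_n$, and then account for the canonicalizations PAGE performs — fixing a reference neighbour at each vertex and a global orientation of the surface — which should contribute the $n^{n+1}$ factor in the denominator of the target bound. The per-state work (face tracing by following the facial permutation, plus the Euler computation) is $O(|E|) = O(n^2)$, which is absorbed inside the $\mathcal{O}(\cdot)$.

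The main obstacle I foresee is the combinatorial bound itself. A direct Stirling estimate of $\prod_v (d_v-1)!$ already exceeds $2^{n^2+3n}/n^{n+1}$ once $n$ is moderately large, so the proof cannot be purely enumerative over rotation systems. The argument must exploit, quantitatively, both the canonical-form symmetry reduction \emph{and} the branch-and-bound pruning against the running upper/lower genus bounds that PAGE maintains. I expect the cleanest route to be to recast the enumeration as a sequence of at most $\binom{n}{2}$ edge-insertion decisions with a bounded exponential branching factor, rather than as an $(n-2)!$-way choice at each vertex; combined with the $n^{n+1}$ symmetry savings this should yield exactly $\mathcal{O}(2^{n^2+3n}/n^{n+1})$. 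Stitching the search-tree bound together with the polynomial per-node overhead then gives the claimed runtime.
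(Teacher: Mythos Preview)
Your proposal does not match the algorithm it purports to analyze. \textsc{PAGE} is \emph{not} a search over rotation systems; the paper is explicit that it instead searches over combinations of simple cycles (candidate facial walks), checking along the way that each partial selection remains consistent with \emph{some} rotation system (Lemma~\ref{lemma:validrotationsys}). Consequently an enumeration bound of the form $\prod_v (d_v-1)!$ is irrelevant to the runtime, and the canonicalizations you invoke --- fixing a reference neighbour at each vertex, fixing a global orientation, an ``edge-insertion recasting'' --- are not operations \textsc{PAGE} performs. You yourself observe that the rotation-system count already exceeds the target bound; the repair you sketch is both vague and disconnected from the actual procedure, so as written there is no path from your outline to the stated exponent.

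The paper's derivation is instead a direct accounting of the recursive cycle-selection search on $K_n$ (the worst case). The dominant term comes from a recursion of the shape $T(f) \le d\, T(f-1) + (\text{polynomial})$, where $d \le b$ bounds the number of candidate cycles through the current anchor vertex and $f$ is the number of faces still to be placed; unrolling gives $O(b^{f+2})$. The paper then bounds $b$ by roughly $2^n/n$ (the count of simple cycles of a given length through a fixed vertex in $K_n$), takes $f+2 \le n+2$, and multiplies by the number $t < n\cdot 2^n$ of starting cycles, obtaining
\[
\left(\frac{2^n}{n}\right)^{n+2}\cdot n\cdot 2^n \;=\; \frac{2^{\,n^2+3n}}{n^{\,n+1}}.
\]
Thus the $n^{n+1}$ in the denominator arises from the per-vertex cycle count $2^n/n$ raised to the $(n+2)$nd power, not from any symmetry reduction over rotation systems. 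To salvage your argument you would need to start from the cycle-combination recursion \textsc{PAGE} actually runs, not from a rotation-system enumeration.
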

 
 We emphasize that \textsc{PAGE} is relatively easy to implement; moreover \S \ref{sec:examples} contains a number of concrete examples where the algorithm is used to determine the genus of graphs whose genus was previously unknown.
 
\section{Preliminaries}
\label{sec:prelim}
For the convenience of the reader, following along with \textit{Graphs on Surfaces} \cite{mohar2001graphs} by Mohar and Thomassen, we provide the following
definitions and relevant theorems.

\subsection{Graph Embeddings and Surfaces}

Intuitively, one can think of an embedding of a graph on a surface as a drawing of the graph on that surface without any edges crossing. We will now make this precise. A \textit{curve} in a topological space $X$ is the image of a continuous map $f:[0,1]\to X$. It is called \textit{simple} if $f$ is injective, and a simple arc with endpoints $f(0)$ and $f(1)$ is said to \textit{connect} these endpoints \cite[Sec.~2.1]{mohar2001graphs}. A graph $G$ is \textit{embedded} in a topological space $X$ if the vertices of $G$ are distinct points in $X$ and every edge of $G$ is represented by a simple arc connecting in $X$ the vertices it joins in $G$, such that the interior of each arc is disjoint from other edges and vertices \cite[Sec.~2.1]{mohar2001graphs}. When we later discuss rotation systems and facial walks, we use the associated directed edges: an undirected edge joining vertices $u$ and $v$ has two directed versions, $(u,v)$ and $(v,u)$, corresponding to the two possible directions of traversal along the same arc. An \textit{embedding} of $G$ into $X$ is thus an isomorphism of $G$ with a graph embedded in $X$. If such an embedding exists, we say that $G$ \textit{can be embedded in} $X$ \cite[Sec.~2.1]{mohar2001graphs}. An embedding of $G$ in a surface $S$ is said to be \textit{cellular} (or a \textit{2-cell embedding}) if every face of $G$ is homeomorphic to an open disc in $\mathbb{R}^2$. Every cellular embedding is a 2-cell embedding, and vice versa \cite[Thm.~3.2.4]{mohar2001graphs}.

\subsection{Rotation Systems}

Cellular embeddings can be encoded combinatorially by listing the local rotations at each vertex. Given a cellular embedding of $G$ in a surface $S$, let $\Pi = \{\pi_v \mid v \in V(G)\}$, where each $\pi_v$ is a cyclic permutation of the edges incident with vertex $v$, such that $\pi_v(e)$ is the successor of $e$ in the clockwise ordering around $v$. Each permutation $\pi_v$ is called the \textit{local rotation} at vertex $v$, and the set $\Pi$ itself is the \textit{rotation system} of the embedding of $G$ in $S$ \cite[p.~90]{mohar2001graphs}. 

\begin{figure}[h!]
    \centering
\includegraphics[width=0.35\textwidth]{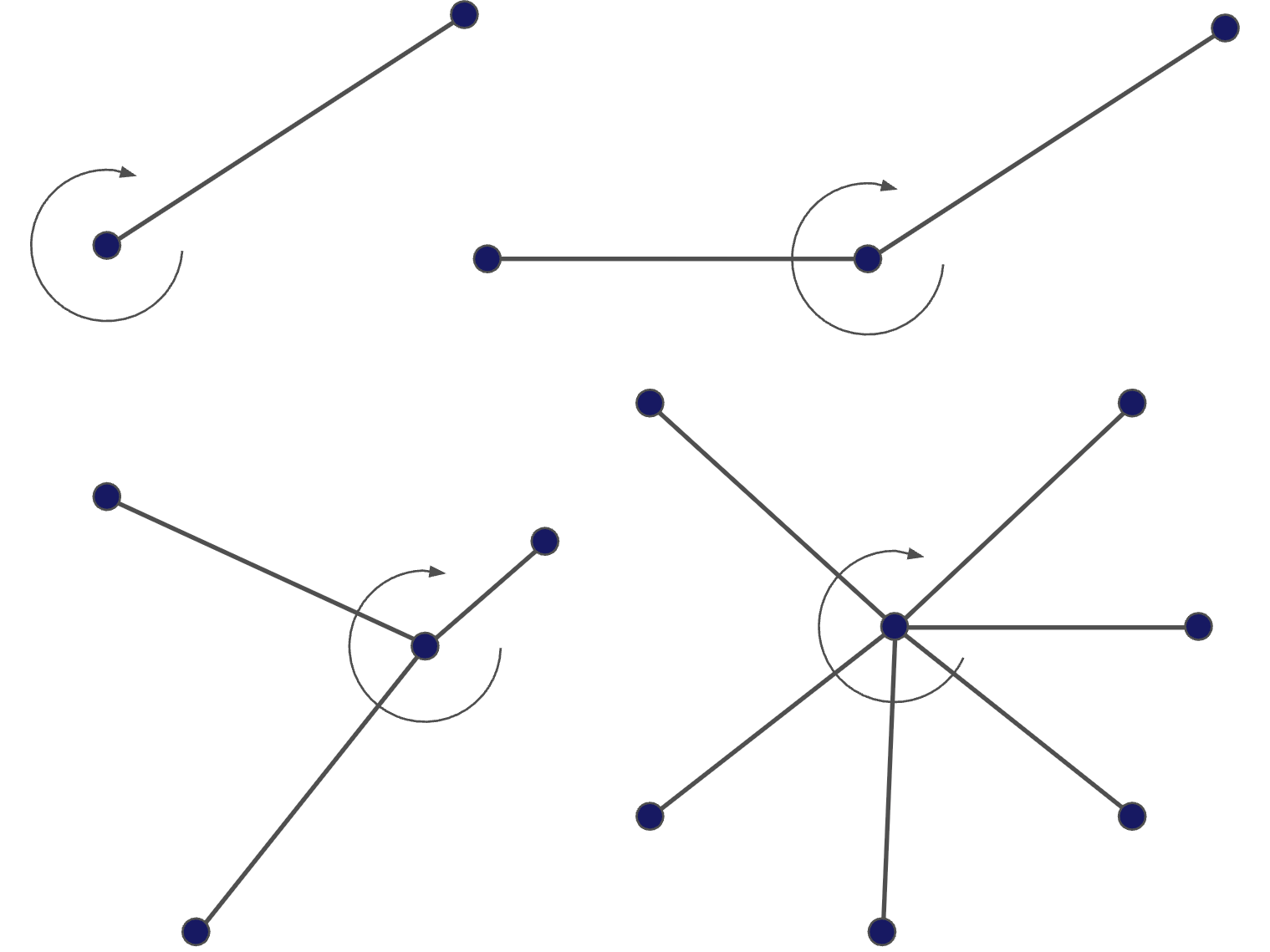}
    \caption{Local Rotations at a Vertex.}
    \label{fig:rotationsystem}
\end{figure}

Rotation systems determine the embedding up to homeomorphism, as shown below.
\begin{theorem}[Heffter-Edmonds-Ringel]\cite[p.~90]{mohar2001graphs}
    Suppose that $ G $ is a connected multigraph with at least one edge that is cellularly embedded in an orientable surface $ S $. Let $ \Pi $ be the rotation system of this embedding, and let $ S' $ be the surface of the corresponding 2-cell embedding of $ G $. Then there is a homeomorphism of $ S $ onto $ S' $ taking $ G $ in $S$ onto $G$ in $S'$ (in such a way that we induce the identity map from $G$ onto its copy in $S'$). In particular, every cellular embedding of a graph $ G $ in an orientable surface is uniquely determined, up to homeomorphism, by its rotation system.
\end{theorem}

\subsection{Orientable Genus}

The orientable genus of a graph $G$ quantifies exactly how many handles a surface needs for $G$ to be embedded in it. The \textit{orientable genus} $g$ of a graph $ G $ is the smallest integer $ h $ for which $ G $ embeds in the orientable surface $ S_h $, the connected sum of $h$ tori \cite[p.~94]{mohar2001graphs}.

\subsection{Facial Walks of a Rotation System}
Once $\Pi=\{\pi_v\}$ is fixed, we build each face by starting at a vertex edge pair and traversing as follows:
\[
  (v,e) \xrightarrow{\;\pi_v\;} (v,\,\pi_v(e))
  \xrightarrow{\;\text{traverse}\;} (v',\,\pi_v(e))
  \xrightarrow{\;\pi_{v'}\;} \cdots
\]
Because $G$ is finite this returns to the start, tracing out a closed walk in $G$. Such a closed walk is called a \emph{facial walk} of~$\Pi$. The amount of distinct facial walks of a rotation system $\Pi$ is denoted $F$ (we do not distinguish a facial walk from any of its cyclic shifts). An embedding attaining the minimal genus is a \textit{minimum genus embedding} \cite[p.~95]{mohar2001graphs}. Every minimum genus embedding of a connected graph is cellular \cite[Prop.~3.4.1]{mohar2001graphs}. Hence, genus computations may focus exclusively on cellular embeddings, and in particular, by Theorem 2 above, only on rotation systems. In fact,
\begin{theorem}[Every Rotation System Corresponds to an Embedding]\cite[p.~87]{mohar2001graphs}
    Every connected multigraph with at least one edge admits a 2-cell embedding in some orientable surface. The embedding is constructed by forming a polygon for each facial walk defined by the rotation system, and identifying the sides of these polygons along their shared edges.
\end{theorem}
Thus, when enumerating possible embeddings of $ G $, one may safely iterate over all $ \prod_{v\in V}(\deg(v)-1)! $ rotation systems, since each corresponds to an embedding on an orientable surface of some genus. The genus $g$ of the surface can then be computed using Euler’s formula,
$$n-m+F=2-2g, \implies g=\frac{-n+m-F+2}{2}$$
Thus, the smallest $g$ occurs when $F$ is maximal, i.e, when we find a rotation system with the most amount of induced facial walks. 

\subsection{Definitions}
Throughout this paper, a \textit{non-backtracking closed directed trail} is a walk that starts and ends at the same vertex, uses no repeated directed edges, and does not immediately follow any directed edge by its reverse. In this paper, a \textit{closed trail} always refers to a non-backtracking closed directed trail unless otherwise stated. Additionally, we say that a collection $\mathcal{F}$ of \textit{closed trails} is \textit{realizable} as a rotation system of $G$ if there exists a rotation system $\Pi$ of $G$ whose facial walks exactly match the closed trails in $\mathcal{F}$.

\subsection{Prior Work} It has been well established that the problem of determining the genus is NP-hard \cite{thomassen1989}. However, it is tractable for fixed genus \cite{mohar1999,myrvold2011, robertson1995}. Mohar has proven the existence of a linear time algorithm for arbitrary fixed surfaces \cite{mohar1999}. Unfortunately, these theoretical results are non-constructive and so far no one has been able to design an algorithm that achieves such performance \cite{myrvold2011}. For instance, Robertson and Seymour showed that checking if a graph contains a given minor can be done in cubic time \cite{robertson1995} and that, for each fixed surface, there are only finitely many forbidden minors that determine whether a graph embeds in that surface \cite{robertson2004}. However, the complete list of forbidden minors is only known for planar \cite{kuratowski1930} and projective plane graphs \cite{archdeacon1981}. Even small toroidal graphs (genus 1) cannot be solved with this approach since there are at least 17.5K toroidal minors and likely many more \cite{myrvold2018}. In practice, the finite number of graph minors scales at an impractical exponential rate with the genus. This makes it intractable to compute all the minors \cite{myrvold2011}. As it stands, the best implemented algorithms are exponential time, and the rest are too complex/impractical for implementation and have intractable constant factors \cite{chambers2002, myrvold2011, myrvold2018}.

\subsection{Existing Algorithms and Limitations}
An algorithm called \textsc{multi\_genus} by Gunnar Brinkmann is a particularly fast method for computing genus of graphs with relatively low genus compared to the vertex degree \cite{brinkmann2022}. Although a formal runtime analysis of \textsc{multi\_genus} has not been presented, experimental results suggest it handles graphs with higher vertex degrees efficiently. However, \textsc{PAGE} seems to remain advantageous for graphs with vertices only of degree 5 or lower. Additionally, our approach is comparatively simpler to implement. As it stands, \textsc{multi\_genus} represents the fastest known approach for high-degree graphs, and low relative genus, whereas \textsc{PAGE} provides an effective alternative, and is particularly advantageous for graphs with bounded vertex degree or high girth. 

Several other genus algorithms outside of \textsc{multi\_genus} have been developed, improving upon earlier methods through more optimized implementations and by using existing optimized solvers such as those for integer linear programming
\cite{beyer2016, chimani2019, sagemath2024}. They can easily compute the genus of graphs the size of $K_6$ in less than a second but, even just adding another vertex, $K_7$ could take hundreds of hours \cite{beyer2016}. $K_8$ and above is almost entirely out of reach. They omit formal runtime bounds, which are likely super-exponential for general graphs. For example, the best available open-source implementation in SageMath exhibits a worst-case complexity of $\mathcal{O}(n(n-1)!^{n})$ \cite{sagemath2024}.

\section{Results on Specific Graphs}
\label{sec:examples}

The purpose of this section is to describe various results that we obtained when using \textsc{PAGE} to determine the genus of certain graphs.

\subsection{Cages} \label{sec:examples_cages}
A graph family of special interest is the $(r, t)$-cage graphs. They are the smallest $r$-regular graphs with girth $t$. The genus of $(3, t)$ cage graphs is known up to $t = 10$, and \textsc{PAGE} extends these results by determining the genus of the $(3, 12)$ cage. Although the structure of $(3, t)$ cages is not fully known for $t > 12$, our approach is likely applicable to higher values of $t$ as these cages are discovered. We outperform all existing algorithms, including \textsc{multi\_genus}, for $t > 8$, and have the only tractable algorithm for $t \geq 12$.

\begin{theorem}
The genus of the unique $(3, 12)$ cage graph is 17. 
 \end{theorem}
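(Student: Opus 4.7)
The plan is to prove the genus equals $17$ by establishing matching upper and lower bounds. The lower bound should follow cleanly from Euler's formula combined with the girth condition, while the upper bound is the computational content, obtained by running \textsc{PAGE} and outputting a certifying rotation system.

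For the lower bound, I would first record the combinatorial data of the Tutte $(3,12)$-cage: it has $V = 126$ vertices and, by $3$-regularity, $E = \tfrac{3 \cdot 126}{2} = 189$ edges. Since any minimum genus embedding is a $2$-cell embedding (Youngs), each face boundary is a closed walk of length at least the girth $12$, so double-counting edge-face incidences gives $2E \ge 12 F$, hence $F \le 31$. Applying the Euler formula $V - E + F = 2 - 2g$ yields
\[
g \;\ge\; \frac{E - V + 2 - F}{2} \;\ge\; \frac{189 - 126 + 2 - 31}{2} \;=\; 17.
\]
This step is entirely routine and does not depend on \textsc{PAGE}.

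For the upper bound I would invoke \textsc{PAGE} on the Tutte $(3,12)$-cage to produce an explicit rotation system $\pi$ at each vertex whose induced face-tracing gives exactly $31$ faces, which by Euler then forces $g = 17$. The key point is that, once produced, $\pi$ is a polynomial-size proof certificate: given $\pi$, one simply runs the standard face-tracing procedure on the edge-flag permutation and counts the orbits, which is verifiable in time polynomial in $|E|$. This side-steps the need to redo any search during verification.

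The hard part will be the upper-bound construction: the rotation space at a $3$-regular graph on $126$ vertices has size $2^{126}$, so a naive enumeration is hopeless. Here I would rely on the structural properties of \textsc{PAGE} described in \S\ref{sec:proof}, in particular its iterative maintenance of upper and lower bounds and its avoidance of the bridge-placement pathologies discussed in the introduction. Because the lower bound is already $17$, \textsc{PAGE} can terminate as soon as it produces any embedding matching that bound; and because the girth is so large, most partial rotation extensions can be pruned quickly by the running face-count bound. The substantive claim of the theorem is therefore that \textsc{PAGE}, instantiated on this graph, does in fact discover such a $31$-face rotation system; exhibiting that system (as \textsc{PAGE} does) completes the proof.
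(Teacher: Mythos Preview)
Your proposal is correct and matches the paper's approach: the theorem is a computational result obtained by running \textsc{PAGE}, and you have simply made explicit the two halves of what the algorithm establishes. Your Euler--girth lower bound is exactly what \textsc{PAGE} recovers internally, since its cycle-distribution step (Lemma~\ref{lemma:cycledistribution}) only sees cycles of length at least $12$ and hence immediately caps $F$ at $\lfloor 2\cdot 189/12\rfloor = 31$; the upper bound is then the $31$-face certificate that the run produces, just as you describe.
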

An example of a huge graph that is impractical to embed optimally is the (6, 12) cage graph. It consists of 7812 vertices, 23436 edges, and an automorphism group of nearly 6 billion elements. Nonetheless, \textsc{PAGE} can still progressively narrow down the genus range. We established bounds for the genus of the (6,12) cage graph between 5860 and 7810 before encountering memory limitations. 
 
\subsection{Circulant and Complete Multipartite Graphs}

Another graph family that is of special interest is the complete $n$-partite graph $K_{{2, 2, \dots, 2}}$ ($n$ copies of 2), also known as the cocktail party graph of order $n$. It is conjectured to have genus $\lceil (n-1)(n-3)/3\rceil $ for all $n$, proven for all $n$ not a multiple of 3 \cite{jungerman1978}. The complete $n$-partite graph represents the problem of how many handshakes are possible in a room of $n$ couples if no one shakes their own partner's hand and has many applications in combinatorics. It is known that $K_{2, 2, \dots, 2}$ is isomorphic to the circulant graph $Ci_{2n}(1,2,\dots,n-1)$, defined as the graph with vertices labeled $0,1,\dots,2n-1$ where each vertex $i$ is adjacent to vertices $(i+j)\bmod 2n$ and $(i-j)\bmod 2n$ for every $j \in \{1,2,\dots,n-1\}$. The genus is also known for all circulant graphs with genus 1 and 2 \cite{conder2015}. However, not all circulant graphs are complete $n$-partite graphs or of small genus. In the vast majority of cases, the genus of arbitrary circulant graphs is unknown. Using \textsc{PAGE}, we were able to determine the genus for several circulant graphs where the values were previously unknown in less than a second:
\begin{theorem}[Circulants] The genus of $Ci_{14}(1,2,3,6)$ is 4. The genus of $Ci_{18}(1, 3, 9)$ is 4. The genus of $Ci_{20}(1, 3, 5)$ is 6. The genus of $Ci_{20}(1, 6, 9)$ is 6. \end{theorem}
Moreover, our approach also verified the genus of certain circulant graphs that correspond to well-known structures, such as the complete $n$-partite graphs: the genera of certain complete multipartite graphs are well-established: the genus of $K_{2,2}$ is $0$. The genus of $K_{2,2,2}$ is $0$. The genus of $K_{2,2,2,2}$ is $1$. The genus of $K_{2,2,2,2,2}$ is $3$. These values were verified with \textsc{PAGE}, consistent with the known results in the literature (see \cite{jungerman1978}).

\subsection{The Gray Graph}
Additionally, significant interest has surrounded the genus of the Gray graph (it happens to be 7), which has been addressed in a dedicated study \cite{gray2005}. Most existing algorithms, except \textsc{multi\_genus}, require over 42 hours to compute the genus of similar graphs, whereas \textsc{PAGE} achieves the same result in just a few minutes. 

\subsection{Progressive Refinement of Genus Bounds}
\label{sec:example_narrowing_bounds}
For large graphs, an exact genus is often infeasible and unnecessary in practice. For cases where it suffices to have an embedding within some error tolerance of the fewest handles, \textsc{PAGE} outputs genus bounds that narrow with increasing iterations.
\begin{theorem}
\label{theorem:4}
 For any graph $G$ with genus $g$, \textsc{PAGE} computes two monotone sequences of integers 
\[
      g_0 \leq g_1 \leq \dots \leq g_r=g,
     \qquad
     G_0 \geq G_1 \geq \dots \geq G_s = g,
\]
that converge to $g$ and,
\begin{enumerate}
\item \textsc{PAGE} halts when $g_i = G_j$, at which point $g_i=g=G_j$. 
\item  The length of both sequences is finite, so \textsc{PAGE} always terminates.
\end{enumerate}
\end{theorem}
The proof of Theorem \ref{theorem:4} is included later in \S\ref{proof:4}. 
To demonstrate the usefulness of Theorem 5, we applied \textsc{PAGE} to large graphs, establishing the genus bounds in Table \ref{tab:narrowing_bounds} within 15 minutes.\vspace{0.55em}

\begin{table}[H]
\centering
\footnotesize
\begin{tabular}{|c|c|c|c|}
\hline
 Graph Name & Reference & Genus Lower Bound & Genus Upper Bound  \\
\hline
Bipartite Kneser Graph (12, 3) & \cite{weisstein:bipkneser:2024c} & 4401 & 8979 \\
DifferenceSetIncidence(40, 13, 4) & \protect\textsuperscript{*} & 91 & 214 \\
Johnson Graph (8,4) & \cite{weisstein:johnson:2024d} & 60 & 238 \\
Johnson Graph (9, 4) & \cite{weisstein:johnson:2024d} & 148 & 558 \\
HoffmanSingletonBipartDoubleGraph & \protect\textsuperscript{*} & 68 & 100 \\
Higman Sims Graph & \cite{weisstein:hsgraph:2024a} & 226 & 490 \\
Cyclotomic Graph 61 & \cite{weisstein:cyclograph:2024b} & 73 & 265 \\
Cyclotomic Graph 67 & \cite{weisstein:cyclograph:2024b} & 91 & 325 \\
\hline
\end{tabular}
\caption{Genus bounds for large graphs established by \textsc{PAGE} within 15 minutes. Graph names follow the built-in naming used by \textit{Mathematica}. \textsuperscript{*}These graphs are defined internally by \texttt{GraphData}, but lack publicly available documentation.}
\label{tab:narrowing_bounds}
\end{table}

\section{Theoretical Justification of PAGE}
\label{sec:TheoryPAGE}
\subsection{Pre-Processing}
We will limit our analysis to connected graphs with vertices of degree $> 2$ since all graphs simplify to this case per Lemma \ref{lemma:connectedVertexGEQ1}.

\begin{lemma}
    \label{lemma:connectedVertexGEQ1}
    We can simplify any graph to remove vertices of degree $\leq 2$ without changing its genus. The genus of a disconnected graph can be calculated by summing the genera of its connected components.
\end{lemma}
\begin{proof}
    Degree $0$ vertices are isolated and can be drawn anywhere on the surface without causing edge crossings since they are not connected to any edges. Degree $1$ vertices likewise can always be added back into an embedding. Degree $2$ vertices can simply be removed and replaced with an edge connecting its neighbors. The additivity of the minimum orientable genus over connected components has been well established \cite{mohar2001graphs}.
\end{proof}
\subsection{Rotation Systems and Euler's Formula}
The general idea of \textsc{PAGE} is Euler's formula $n - m + F = 2 - 2g$ which links the number of facial walks $F$ with the genus $g$ \cite{euler1758}. Naively, finding the minimum genus amounts to searching all combinations of closed trails to find the one that corresponds to a maximal number of facial walks as seen in Lemma \ref{lemma:facialsubsetcycles}. Traditional improved exhaustive search algorithms instead search through the rotation systems since they each induce an embedding \cite{sagemath2024, heffter1891}. Searching through rotation systems however does not easily allow further pruning of the search space nor inform a heuristic search through rotation systems in an order that most quickly narrows down the genus. Our algorithm instead searches through closed trail combinations which facilitates a number of optimizations (early stopping, heuristic search) and, with our main contribution, still allows us to prune collections of closed trails that cannot be realized as a rotation system of $G$, which results in an exponentially reduced search space.
\subsection{Necessary Conditions for Realizing Closed Trails}
The following lemmas detail necessary conditions for realizing collections of closed trails as rotation systems, allowing \textsc{PAGE} to reduce the search space of all combinations of closed trails.
\begin{lemma}
    \label{lemma:directedEdgesUsed}
    Given a graph $G$, any set of facial walks induced by a rotation system must use each directed edge exactly once.
\end{lemma}
\begin{proof}
This is clear by the definition of a rotation system. 
\end{proof}

\begin{lemma} \label{lemma:facialsubsetcycles} Given a graph $G$, the set of closed trails $C$ of $G$ is finite and any set of facial walks induced by a single rotation system of $G$ is a subset of $C$. \end{lemma}
\begin{proof} The finite size of $C$ follows from the finite combinations of up to $2m$ directed edges and that a closed trail cannot repeat a directed edge and must therefore contain at most $2m$ directed edges. By Lemma \ref{lemma:directedEdgesUsed}, any facial walk in a set of facial walks induced by a rotation system of $G$ is a closed trail. \end{proof}

\begin{lemma} \label{lemma:vertexuses} Given a graph $G$ choose any vertex $v \in V$ of degree $d$. Then in any set of facial walks induced by a rotation system of $G$, the vertex $v$ occurs exactly $d$ times as part of a closed trail. \end{lemma}
\begin{proof} By Lemma \ref{lemma:directedEdgesUsed}, any set of facial walks induced by a rotation system of $G$ must use each directed edge of $G$ exactly once. Since each facial walk is a closed trail by Lemma \ref{lemma:facialsubsetcycles}, it must use $2k$ of the directed edges that touch $v$ for some non-negative integer $k$ representing the number of occurences of $v$ in that facial walk. There are $2d$ directed edges that touch $v$ and thus $2d / 2 = d$ occurrences of $v$. \end{proof}

\begin{lemma} \label{lemma:cycledistribution} Given a graph $G$, any set of facial walks induced by a rotation system must be a set of closed trails whose lengths add up to the number of directed edges $2m$. \end{lemma}
\begin{proof} By Lemma \ref{lemma:facialsubsetcycles}, the set of facial walks is a set of closed trails. The length of each trail is the number of directed edges it uses. A set of facial walks induced by a rotation system of $G$ uses each directed edge of $G$ exactly once by Lemma \ref{lemma:directedEdgesUsed}. There are $2m$ directed edges in $G$. \end{proof}

\begin{proposition} \label{prop:ijkcriteria}
    Let $G $ be a graph. Then any collection of facial walks $\mathcal{F}$ induced by a rotation system of $G$ is a set of closed trails that satisfies the following: \begin{enumerate}
            \item Every directed edge appears in exactly one closed trail.
            \item There do not exist two distinct closed trails $c_1$ and $c_2$ such that $c_1$ contains the sequence of directed edges $(a, b), (b, c)$ and $c_2$ contains the reversed sequence $(c, b), (b, a)$ for any vertex $b$ with $\deg(b) > 2$.
    \end{enumerate}
\end{proposition}
\begin{proof}
    Each facial walk is constructed by following directed edges according to the cyclic order at each vertex defined by the rotation system. Since the walk turns at each step without reversing the incoming edge, it is a closed trail. The rotation system specifies a unique next edge for every incoming edge, so the facial walks partition the $2m$ directed edges of $G$, with each used exactly once. Now suppose for contradiction that two facial walks $c_1$ and $c_2$ contain subpaths $(a,b),(b,c)$ and $(c,b),(b,a)$ respectively. Then $b$ must simultaneously follow both cyclic orders $(a,b,c)$ and $(c,b,a)$, which is impossible unless $\deg(b) = 2$, in which case the two orders are equivalent. 
\end{proof}

\subsection{Sufficiency and Limitations}
In the following, we discuss why the above necessary conditions are not sufficient for ensuring the correctness of \textsc{PAGE}. The following is a partial converse of Proposition \ref{prop:ijkcriteria}.
\vspace{-0.3em}
\begin{proposition} \label{prop:ijkcriteria_converse}
Let $G$ be a graph and $\mathcal{F}$ be a collection of closed trails satisfying the conditions of Proposition \ref{prop:ijkcriteria}. If every vertex $v$ of $G$ has $\deg(v)\leq 5$ then $\mathcal{F}$ is realizable as a set of facial walks of some rotation system of $G$. Additionally, if some vertex $v$ of $G$ has $\deg(v)>5$ the conditions in Proposition  \ref{prop:ijkcriteria} are not sufficient. 
\end{proposition}
\vspace{-1.2em}
\begin{proof}
Fix a vertex $v$ of degree $d$ and label its incident edges $e_1, \dots, e_d$. Traverse the closed trails in $\mathcal{F}$. Each time a closed trail enters $v$ along a directed edge $e_i = (u,v)$ and leaves along $e_j = (v,w)$, we say that the ordered pair $(e_i, e_j)$ \textit{occurs} at $v$. This defines a map
\vspace{-0.5em}
\begin{equation*}
\pi_v : \{e_1, \dots, e_d\} \to \{e_1, \dots, e_d\}, \qquad \pi_v(e_i) = e_j \quad \text{if } (e_i, e_j) \text{ occurs at } v.
\end{equation*}
We now show that $\pi_v$ is a cyclic permutation of the incident edges of $v$. Let $e_j = \{v,w\}$. Since each directed edge $(v,w)$ occurs in exactly one closed trail by Proposition~\ref{prop:ijkcriteria}, there exists some trail entering $v$ along $(u,v)$ and exiting along $(v,w)$. Let $e_i = \{u,v\}$. Then $(e_i, e_j)$ occurs at $v$, and so $\pi_v(e_i) = e_j$. Thus, $\pi_v$ is surjective. As the set $\{e_1, \dots, e_d\}$ is finite $\pi_v$ is also injective, so it is a bijective map from $\{e_1, \dots, e_d\}$ to itself and hence is a permutation of $\{e_1, \dots, e_d\}$. To show that $\pi_v$ has no fixed points, suppose for contradiction that $\pi_v(e_i) = e_i$. Then $(e_i, e_i)$ occurs at $v$, meaning some trail passes through $v$ along $(u,v)$ and immediately backtracks along $(v,u)$, contradicting the non-backtracking condition. To show that $\pi_v$ has no 2-cycles, suppose $\pi_v(e_i) = e_j$ and $\pi_v(e_j) = e_i$ for $e_i \ne e_j$. Let $e_i = \{u,v\}$ and $e_j = \{w,v\}$. Then there exist two trails $c_1$ and $c_2$ such that $c_1$ contains $(u,v), (v,w)$ and $c_2$ contains $(w,v), (v,u)$, violating condition (2) of Proposition \ref{prop:ijkcriteria}. Suppose that $\pi_v$ decomposes into $k > 1$ disjoint cycles $\pi_v = C_1 \circ C_2 \circ \dots \circ C_k$.
Since there are no fixed points or transpositions, each $C_i$ has length at least $3$, so
\begin{equation*}
d \geq \sum_{i=1}^k \text{Length}(C_i) \geq 3k \quad \Rightarrow \quad k \leq \left\lfloor \frac{d}{3} \right\rfloor.
\end{equation*}
But $d \leq 5$, so $k = 1$. Therefore, $\pi_v$ is a cyclic permutation. Defining $\pi_v$ at each vertex $v \in V$ gives a rotation system for $G$, and by construction, the facial walks of this rotation system are precisely the closed trails in $\mathcal{F}$. In Figure \ref{fig:degree6counterexample} the conditions of Prop. \ref{prop:ijkcriteria} are met, yet the closed trails cannot be realized by a rotation system.
\end{proof}

\section{Construction of the Algorithm}
\label{sec:Construction}
\subsection{Reducing the Search Space}
We reduce the search space by pruning all closed trail combinations that don't satisfy the necessary conditions to be realized by a rotation system of $G$. We consider collections that use each directed edge exactly once (Lem. \ref{lemma:directedEdgesUsed}), pass through each vertex according to its degree (Lem. \ref{lemma:vertexuses}), use exactly $2m$ directed edges (Lem.  \ref{lemma:cycledistribution}), and are realizable by a rotation system for vertex degrees less than $6$ (Prop. \ref{prop:ijkcriteria} and \ref{prop:ijkcriteria_converse}). The \texttt{PotentialMaxFit} procedure applies these lemmas to efficiently reject unrealizable closed trails early. Lemma \ref{lemma:facialsubsetcycles} indicates the need to filter out a subset of the closed trails. First, \texttt{PotentialMaxFit} discards sets with duplicate directed edges (Lem. \ref{lemma:directedEdgesUsed}). It tracks vertex usage (Lem. \ref{lemma:vertexuses}) and rejects overused vertices. Next, \texttt{PotentialMaxFit} ensures the sum of trail lengths is at most $2m$ (Lem. \ref{lemma:cycledistribution}) and rejects collections with conflicting subpaths (Prop. \ref{prop:ijkcriteria} and \ref{prop:ijkcriteria_converse}). These conditions allow us to prune unrealizable candidates early.

\subsection{Ensuring Sufficiency of the Conditions}
The final step ensures we only consider collections of closed trails realizable by a rotation system. We extend \texttt{PotentialMaxFit} to reject if the local rotation at any vertex splits into disjoint cycles (Prop. \ref{prop:ijkcriteria_converse}). This can be done efficiently by updating the partial cyclic orderings at each vertex each time we consider a new closed trail in our candidate set. This suffices, as cyclic orderings at each vertex define a rotation system.
\begin{figure}[h]
\centering
\includegraphics[width=0.4\linewidth]{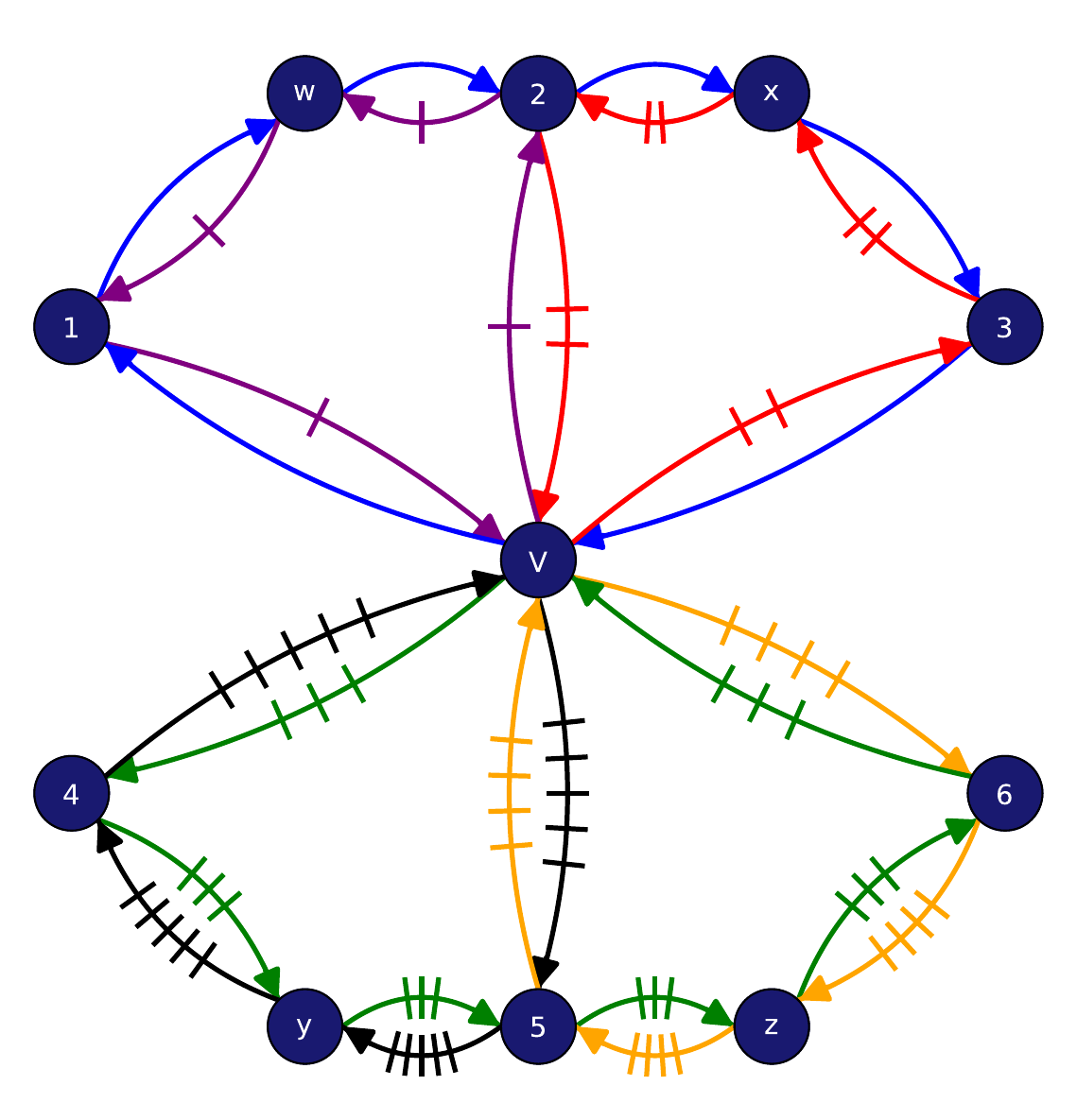}
\caption{Degree 6 example showing Proposition \ref{prop:ijkcriteria} is not sufficient.}
\label{fig:degree6counterexample}
\end{figure}

\subsection{Optimization via Backtracking}
At this point, we could finish a naive implementation of the algorithm by iterating through all closed trail combinations and rejecting any that don't satisfy \texttt{PotentialMaxFit}. Closed trails can be enumerated using Algorithm \ref{algo:cycle_finding} by going through each length $t, t+1, \ldots, 2m$. Yet, before implementing this we can still make some easy optimizations. We know that if $(a, b)$ occurs in some closed trail $c_1$ in a realizable set, then $(b, a)$ occurs in some other closed trail $c_2$ in the same set. We can thus avoid searching all closed trail combinations including $c_1$ if we show that \texttt{PotentialMaxFit} rejects all possible $c_2$ trails when $c_1$ is part of the set for any $(a, b)$. 

We do this using a recursive backtracking algorithm that picks an edge $e$ at each step that it knows must be part of a realizable set. It breaks ties by choosing in the way that minimizes the number of closed trails to search. When the candidate set is empty, this is the edge contained in the minimum number of closed trails so the search only has to consider a small number of starting trails which significantly reduces the branching factor at the top of the recursion tree and eliminates redundant exploration. When the candidate set is nonempty, this is the reverse edge of some edge in some closed trail in the set as long as the reverse edge is not in any closed trail in the set. We loop through all such reverse edges to choose the one contained in the minimum number of closed trails that have not been ruled out. We define \texttt{RequiredEdge} to be the function that computes this edge $e$ given a realizable set.

The backtracking algorithm backtracks when it has tried all the closed trails that contain the edge chosen at the current step and all of them are rejected by \texttt{PotentialMaxFit} because then the candidate set cannot be completed to a realizable set. The backtracking algorithm keeps adding closed trails to the candidate set until it reaches a full one with $2m$ edges by Lemma \ref{lemma:cycledistribution}. By construction of \texttt{PotentialMaxFit}, this is guaranteed to be realizable using a rotation system. In order to stop searching when we first find such a realizable candidate set, we iterate over trail combinations in order of decreasing number of trails. Since genus is inversely related to the number of faces in an embedding, and each face corresponds to a closed trail, ordering by decreasing number of trails naturally prioritizes embeddings of lower genus. This allows the algorithm to halt as soon as a minimal-genus realization is found, while still guaranteeing the search finds the minimum genus over all valid rotation systems without having to search all of them. To facilitate iterating over trail combinations, we define \texttt{AllClosedTrailsWithEdge} to enumerate all the closed trails containing a given edge. All this being said, we can finally formalize these optimizations using \texttt{PotentialMaxFit}, \texttt{RequiredEdge}, and \texttt{AllClosedTrailsWithEdge} in Algorithm \ref{algo:page} which completes \textsc{PAGE}.

\section{Formal Algorithms} 
\label{sec:algorithms}
\vspace{-1em}
\begin{algorithm}[H]
{\fontsize{8}{10.5}\selectfont
\setlength{\baselineskip}{10pt}
    \caption{\textsc{PAGE} - Calculate the Genus}
    \label{algo:page}
    \begin{algorithmic}[1]
    \setlength{\itemsep}{0pt}
\setlength{\parskip}{0pt}
        \Procedure{Search}{$G, \texttt{candidate\_set}$}
            \State $\texttt{required\_edge} \gets \textsc{RequiredEdge}(G, \texttt{candidate\_set})$
            \State $\texttt{trails\_to\_check} \gets \textsc{AllClosedTrailsWithEdge}(G, \texttt{required\_edge})$
            \For{each \texttt{trail} in \texttt{trails\_to\_check}}
                \State $\texttt{candidate\_set} \gets \texttt{candidate\_set} \cup \texttt{trail} $
                \If{\texttt{PotentialMaxFit}(\texttt{candidate\_set}) rejects}
                    \State $\texttt{candidate\_set} \gets \texttt{candidate\_set} \setminus \texttt{trail} $
                    \State \textbf{continue}
                \EndIf
                \If{\texttt{candidate\_set} contains $2m$ directed edges}
                    \State \textbf{return} $(V - E + |\texttt{candidate\_set}| - 2)/(-2)$
                \EndIf
                \State $\texttt{recurse} \gets \Call{Search}{G, \texttt{candidate\_set}}$
                \If{$\texttt{recurse} \neq -1$}
                    \State \textbf{return} \texttt{recurse} 
                    \Else
                    \State $\texttt{candidate\_set} \gets \texttt{candidate\_set} \setminus \texttt{trail} $
                    \State \textbf{continue}
                \EndIf
            \EndFor
            \State \Return -1
        \EndProcedure
        \State $\Call{Search}{G, \emptyset}$
    \end{algorithmic}}
\end{algorithm}
\subsection{Formal Pseudocode of PAGE}
This is the \textsc{PAGE} algorithm as detailed in \S\ref{sec:Construction}. On line 6, we reject all unrealizable candidate sets.
On line 10, we check if we have completed a realizable candidate set and if we do, we calculate the genus using Euler's formula and bubble up the result through the search tree on line 11.
On line 13, we keep adding more closed trails to the candidate set recursively.
On line 14 and 17, we reject candidate sets that become unrealizable for all possible trails we could add further down the search tree.
On line 15, we bubble up realizable candidate sets from further down the search tree if found.
On line 21, we've checked all closed trails and all of them reject so we backtrack. 

\begin{algorithm}
{\fontsize{8}{10.5}\selectfont
{ 
\setlength{\baselineskip}{10pt}
\caption{$k$-trail Finding Algorithm}
    \label{algo:cycle_finding}
    \begin{algorithmic}[1]
    \setlength{\itemsep}{0pt}
\setlength{\parskip}{0pt}
        \State \textbf{Input:} Graph $G = (V, E)$ and length $k$
        \State \textbf{Output:} Sequence of length-$k$ closed trails of $G$
        \State $\text{closed\_trails} \leftarrow \emptyset$
        \State $\text{queue} \leftarrow \text{FIFO queue with each single vertex path}$
        \While{$\text{queue}$ is not empty}
            \State $\text{path} \leftarrow \text{dequeue(queue)}$
            \If{$\text{len(path)} = k$}
                \If{first vertex of path is a neighbor of last vertex}
                    \If{last vertex is not the same as second vertex}
                        \If{directed edge from last to first vertex is not a repeat}
                            \State $\text{closed\_trails} \leftarrow \text{closed\_trails} \cup \text{path}$
                        \EndIf
                    \EndIf
                \EndIf
            \Else
                \For{each neighbor $v$ of last vertex in $\text{path}$}
                    \If{$\text{len(path)} > 2 \And $v is the second to last vertex of path}
                        \State \textbf{continue}
                    \EndIf
                    \If{directed edge from last vertex to $v$ is a repeat}
                        \State \textbf{continue}
                    \EndIf
                    \If{$v \geq \text{first vertex of path}$}
                        \State \text{enqueue(queue, path $\cup \ \{v\}$)}
                    \EndIf
                \EndFor
            \EndIf
        \EndWhile
    \end{algorithmic}
    }}
\end{algorithm}

\subsection{Formal Pseudocode of k-trail Finding Algorithm}

Since a realizable rotation system corresponds to a collection of closed trails whose lengths sum to $2m$ (Lemma~\ref{lemma:cycledistribution}), we need only generate closed trails of lengths that could feasibly appear in such a partition of the directed edge set. Moreover, if $t$ is the girth of $G$, then every non-backtracking closed directed trail has length at least $t$, since every such trail contains a cycle. Thus Algorithm~2 is only called for relevant lengths $k$ with $t\leq k\leq 2m$, and in practice one may further restrict to lengths that can occur in a partition of $2m$. We adapt the algorithm by Liu et al. \cite{liu2006} for enumerating simple cycles. It is advantageous for its simplicity, easily parallelized form, and ability to generate all cycles of a given length $k$ efficiently without keeping other cycles in memory or doing extensive computation on the graph beforehand. It is easily extended to closed trails as seen in the pseudocode. Line 8 ensures that the path forms a closed walk by checking that the last vertex is adjacent to the first. Line 9 prevents immediate reversals (backtracking) by ensuring that the last edge added does not reverse the previous one. Line 10 enforces that no directed edge is repeated within a trail. The condition in Line 23, requiring $v \geq$ the first vertex of the path, ensures that each cycle is only enumerated once up to rotation. We define the ordering on the vertices arbitrarily as long as its fixed for the given graph.

\section{Correctness and Output}
\begin{theorem} \textsc{PAGE} takes any graph $G$, calculates its genus, and produces the faces for an embedding of $G$ on a minimal genus surface $S$. \end{theorem}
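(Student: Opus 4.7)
The plan is to promote the preceding numerical correctness argument into a statement about the combinatorial object that \textsc{PAGE} actually returns. The previous theorem establishes that the value produced is the correct genus $g$; what remains is to verify that the particular list of simple cycles recovered at termination is the face boundary set of some embedding on a surface of genus exactly $g$. First, I would pin down what the algorithm holds at the moment the main loop exits. When the inner \texttt{search} call reports \emph{fit works}, it does so on a distribution together with a concrete list of simple cycles $C_1, \dots, C_k$ whose lengths sum to $2m$ and which collectively use every directed edge of $G$ exactly once. Applying Lemma \ref{lemma:validrotationsys} together with the supplementary check described in Remark \ref{remark:rotationcriteria} shows that this list induces, at each vertex $v$, a well-defined cyclic permutation of the edges incident to $v$: each cycle passing through $v$ contributes one transition ``incoming edge $\mapsto$ outgoing edge,'' and the avoidance criterion of Lemma \ref{lemma:validrotationsys} forbids contradictory pairs. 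By Youngs' theorem cited in the introduction, this rotation system determines a unique orientable $2$-cell embedding of $G$ whose face boundaries are exactly $C_1, \dots, C_k$.

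Next, I would invoke Lemma \ref{lemma:calcGenus} to compute the genus of the resulting embedding as $g = \max(0, 1 - (k - m + n)/2)$, matching the value output by \textsc{PAGE}. Because the main loop traverses cycle distributions in an order permitting early stopping only when no feasible distribution of strictly greater cardinality remains to be examined, $k$ is maximal among all valid fittings; combined with Lemma \ref{lemma:facialsubsetcycles}, no embedding of $G$ can have more than $k$ faces, so the embedding realised by $C_1, \dots, C_k$ sits on a surface of minimum orientable genus. This, together with the algorithm having already been shown to halt and produce $g$, gives the full claim.

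The main obstacle is the gap flagged in Remark \ref{remark:rotationcriteria}: for a vertex $v$ of degree greater than $5$, the pairwise condition in Lemma \ref{lemma:validrotationsys} is necessary but not sufficient for the transitions at $v$ to glue into a single cyclic permutation, since they could instead decompose into several disjoint cycles on the incident edges. I would close this gap by viewing the transitions at $v$ as an auxiliary $2$-regular multigraph on the $d$ edges incident to $v$ and requiring it to be a single $d$-cycle; this is a purely combinatorial check that \textsc{PAGE}'s \texttt{search} routine must perform before declaring a fit, and it rules out any ``locally valid but globally incompatible'' assignment. Once this check is incorporated, Youngs' correspondence applies without caveat and the cycles $C_1, \dots, C_k$ are precisely the faces of the desired minimum genus embedding, completing the proof.
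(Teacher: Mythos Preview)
Your proposal is correct and follows essentially the same line as the paper's own proof, which simply appeals to ``the lemmas above'' to conclude that the returned cycles are facial walks and then observes that, by definition, these facial walks constitute the faces of the minimal-genus embedding. Your version is considerably more fleshed out---you make explicit the appeal to Youngs' correspondence and you carefully confront the high-degree caveat from Remark~\ref{remark:rotationcriteria}---but the underlying argument is the same.
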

\begin{proof} 
As shown in \S\ref{sec:TheoryPAGE} and \S\ref{sec:Construction}, \textsc{PAGE} discards sets of closed trails using necessary and sufficient criteria until it ends up with a realizable set of closed trails of maximum size. These closed trails are the facial walks which form the polygonal disc faces that when glued along shared edges and connected at shared vertices form $S$.
\end{proof}

This allows a ``proof certificate" to verify that the genus outputted by \textsc{PAGE} is no less than the minimum genus and is how we produced Figure \ref{fig:k33embedding} and the below color-coded faces of the minimum genus embedding of the Balaban (3, 10)-cage.

\begin{figure}[H]
    \centering
    \includegraphics[width=0.5\linewidth]{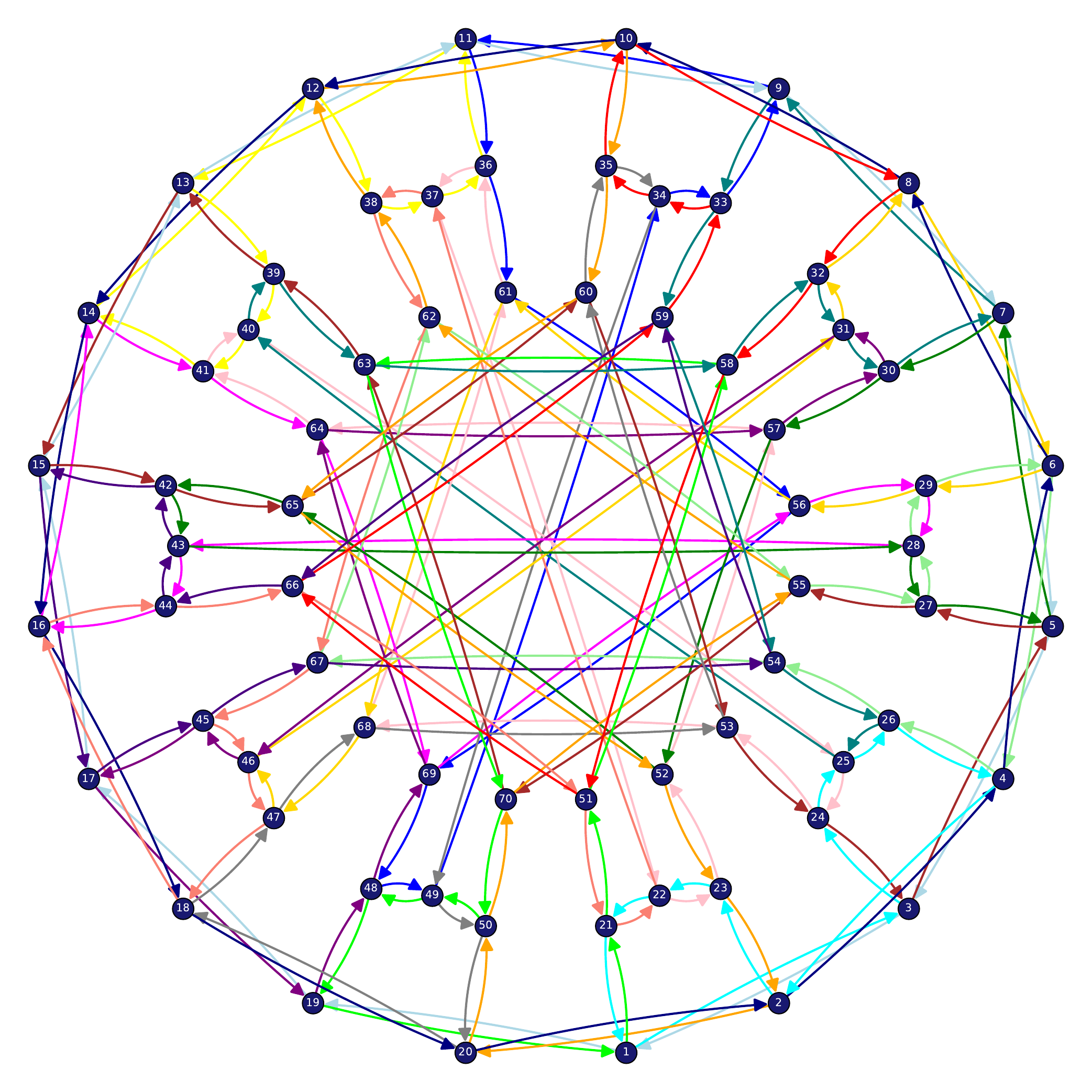}
    \caption{Genus 9 Embedding of the Balaban 10 Cage.}
\end{figure}

\begin{theorem}
    \textsc{PAGE} runs in $\mathcal{O}(n(4^m/n)^{n/t})$ time for graphs of girth $t$.
\end{theorem}
\begin{proof}
    The algorithm has some optimizations that allow it to stop early not captured in the below analysis, but it still holds as an upper bound on the runtime: the number of directed trails of any length is $\mathcal{O}(4^m)$ because it is the sum of the values in the $2m$-th row of Pascal's triangle. This can also be used to upper bound the number of closed trails which is what Algorithm \ref{algo:cycle_finding} enumerates. Let $c$ be the number of trails enumerated by our trail finding algorithm. Organizing the $c$ trails by vertex is $\mathcal{O}(2mc) < \mathcal{O}(2m4^m)$ since each trail has at most $2m$ edges. Choosing the required edge at each step that minimizes the branching factor can be done in $\mathcal{O}(n)$ by iterating through the vertices to find the most used ones. Looking up the trails that use the edge can be done in $\mathcal{O}(1)$ time via hashset lookup by the endpoints of the edge. Checking if a trail is used can be done in $\mathcal{O}(1)$ via hashset lookup. Checking if the edges of a trail are used is $\mathcal{O}(m)$ by storing the edges used in a hashset. Checking if adding a trail makes the rotation system unrealizable can be done in $\mathcal{O}(m)$ by storing the current rotation and using \texttt{PotentialMaxFit}. Let $f$ be the number of closed trails needed to have $2m$ edges in the candidate set, $b$ the number of trails by vertex, $u \leq b$ the number of unused trails by vertex, $a \leq u$ the number of unused trails with edges available, and $d \leq a$ the ones where the candidate set is realizable. Then each \texttt{Search} iteration satisfies $T(f) = \mathcal{O}(n) + \mathcal{O}(1) + b \mathcal{O}(1) + u \mathcal{O}(m) + a \mathcal{O}(m) + d T(f-1)$ and $T(0) = 0$. This implies $T(f) = \mathcal{O}(d^f \cdot (n + b + um + am)) = \mathcal{O}(d^f \cdot (n + b + m(u + a))) < \mathcal{O}(b^{f + 2})$. Let $h < 4^m$ be the number of start trails to try out. Then the total time of all \texttt{Search} iterations is $\mathcal{O}(b^{f + 2} \cdot h) < \mathcal{O}((4^m / n)^{n/t} \cdot 4^m) = \mathcal{O}(n(4^m/n)^{n/t})$.
\end{proof}
\subsection{Proof of Theorem \ref{theorem:4}} \label{proof:4}
When feasible, we search in decreasing order on the number of closed trails, equivalently increasing genus, so we can stop as soon as a realizable candidate set is achieved. Searching in this order may require too much computation before reaching a realizable candidate set, especially if there are too many closed trails to iterate through to identify the greatest-cardinality candidate sets as in \S4.4. In this case, we can instead iterate in increasing order of maximum closed-trail length. More precisely, for $\ell=t,t+1,\ldots,2m$, we search using only closed trails of length at most $\ell$. If no realizable candidate set exists using only trails of length at most $\ell-1$, then any remaining realizable candidate set must contain at least one closed trail of length at least $\ell$. Since every closed trail has length at least $t$, such a candidate set has cardinality
\[
F\leq 1+\left\lfloor \frac{2m-\ell}{t}\right\rfloor.
\]
By Euler's formula, this gives the lower bound
\[
g\geq 
\left\lceil
\frac{m-n+2-\left(1+\left\lfloor \frac{2m-\ell}{t}\right\rfloor\right)}{2}
\right\rceil .
\]
This produces a non-decreasing lower-bound sequence for the genus which eventually converges to $g$ in finitely many iterations. We now prove Theorem~6.
\begin{proof}
Proof. By Euler's formula [9], if $F$ is the number of facial walks, then
\[
g=\frac{m-n+2-F}{2}.
\]
Since every facial walk has length at least the girth $t$ and the facial walks together use all $2m$ directed edges, we have $tF\leq 2m$. Hence
\[
g\geq \frac{m-n+2-\frac{2m}{t}}{2}
=
\frac{2t-nt+m(t-2)}{2t}.
\]
Thus we may take the initial lower bound to be
\[
g_0=
\max\left\{
0,
\left\lceil
\frac{2t-nt+m(t-2)}{2t}
\right\rceil
\right\}.
\]
For the initial upper bound, the maximum possible orientable genus is bounded above by the one-face Euler characteristic bound, so we may take
\[
G_0=\left\lfloor\frac{m-n+1}{2}\right\rfloor.
\]
Thus $g_0\leq g\leq G_0$.

While \textsc{PAGE} searches it keeps the current bounds $g_{\text{current}}\leq g\leq G_{\text{current}}$. As \textsc{PAGE} iterates it may perform one of two updates,
\begin{itemize}
    \item (\textit{New Upper Bound}) When a candidate set of closed trails is formed containing all $2m$ directed edges, the previous call to \textsc{PotentialMaxFit} has already verified that it is realizable as a rotation system, so that each closed trail can be regarded as a facial walk of some rotation system. Then \textsc{PAGE} calculates the genus of the surface this rotation system embeds $G$ on, yielding $G_{\text{new}}=\frac{1}{2}\left(m-n+2-F\right)$ where $F$ is the amount of facial walks. If $G_{\text{new}} < G_{\text{current}}$ then we have a new upper bound and we append it to the upper-bound sequence $G_{s+1}:=G_{\text{new}}$. 
    \item (\textit{New Lower Bound}) $g_\text{current} = g_i$ follows directly from the iteration $i$ as previously defined.
\end{itemize}
Each update preserves $g_{\text{current}}\leq g \leq G_{\text{current}}$ and we have already shown that $g_\text{current}$ converges to $g$ in a finite number of steps. The same is true for $G_\text{current}$, since in $\mathcal{O}(m)$ iterations we will also have explored all closed trail combinations and by the correctness of \textsc{PAGE} have found a realizable candidate set with genus $g$. Thus \textsc{PAGE} halts with $g_i = g = G_j$ in a finite number of steps. 
\end{proof}

\section{Remarks and Comments}
\subsection{Extensions} 
As shown in various examples throughout this paper, determining the genus of a given graph is a longstanding problem in graph theory. Historically, the process of determining a graph's genus has often required extensive research and time, specifically focused on the individual graph in question. \textsc{PAGE} offers a new, practical, and fast method for calculating the genus of any graph. \textsc{PAGE} is also easily amenable to further optimization when specific information about a graph is known. For example, for large graphs, \textsc{PAGE} could be integrated with a computer algebra system to automate optimizations based on the graph's automorphism group. For instance, a graph like the Tutte-Coxeter (3,8) cage has the property that any path of up to 5 edges is equivalent to any other path up to automorphism \cite{Coxeter1958}. \textsc{PAGE} could leverage properties like this to select larger initial segments of closed trails while searching and even further reduce the search space if needed. Additionally, \textsc{PAGE} has the potential to answer many open conjectures in graph theory and advance the problem of completing the list of forbidden toroidal minors and indeed the sets of forbidden minors for surfaces of higher genus \cite{myrvold2018}. \textsc{PAGE} runs on large enough graphs to be useful for a number of applications such as designing circuit boards and microprocessors, roads and railway tracks, irrigation canals and waterways, quantum physics, and more.

\subsection{Runtime comparisons} The purpose of this section is to do an experimental comparison of the runtime of \textsc{PAGE} with \textsc{SAGEMath} and \textsc{multi\_genus} when computing the genus of the 3-regular Cage graphs (Sec. \ref{sec:examples_cages}) and the complete and complete-bipartite graphs (Sec. \ref{sec:prelim}). We can see that \textsc{PAGE} takes advantage of the high girth of the Cage graphs to scale exceptionally well. \textsc{PAGE} is also competitive on the complete and complete-bipartite graphs even though they have low fixed girths of 3 and 4 respectively.

\begin{table}[H]
\centering
\footnotesize
\begin{tabular}{|l|c|c|c|c|c|c|c|}
\hline
 k   & g   & v    & e     & genus    & \textsc{PAGE} (s) & \textsc{SAGEMath} (s) & \textsc{multi\_genus} (s)  \\
\hline
 3   & 3   & 4    & 6     & 0        & 0.008      & 0.004      & 0.006 \\
 3   & 4   & 6    & 9     & 1        & 0.008      & 0.039      & 0.006 \\
 3   & 5   & 10   & 15    & 1        & 0.008      & 0.027      & 0.006 \\
 3   & 6   & 14   & 21    & 1        & 0.008      & 0.010      & 0.006 \\
 3   & 7   & 24   & 36    & 2        & 0.010      & 1.737      & 0.006 \\
 3   & 8   & 30   & 45    & 4        & 0.032      & 118.958    & 0.012 \\
 3   & 9   & 58   & 87    & 7        & 1.625      & DNF        & 45.099 \\
 3   & 10  & 70   & 105   & 9        & 39.211     & DNF        & 9863.72 \\
 3   & 12  & 126  & 189   & 17       & 254.45     & DNF        & DNF \\
\hline
\end{tabular}
\caption{Genus and time measurements (in seconds) for Cage graphs.}
\label{tab:genus_data}
\end{table}

\begin{table}[H]
\centering
\footnotesize
\begin{tabular}{|l|c|c|c|c|c|c|c|}
\hline
 k   & g   & v    & e     & genus    & \textsc{PAGE} (s) & \textsc{SAGEMath} (s) & \textsc{multi\_genus} (s)  \\
\hline
2	 & 3   & 3	  & 3	  & 0	     & 0.005	         & 0.004	& 0.008 \\
3	 & 3   & 4	  & 6	  & 0	     & 0.005	         & 0.003	& 0.008 \\
4	 & 3   & 5	  & 10    & 1	     & 0.005	         & 0.005	& 0.008 \\
5	 & 3   & 6	  & 15	  & 1	     & 0.005	         & 0.023	& 0.008 \\
6	 & 3   & 7	  & 21	  & 1	     & 0.005	         & DNF	    & 0.009 \\
7	 & 3   & 8	  & 28	  & 2	     & 2.219	         & DNF	    & 0.008 \\
\hline
\end{tabular}
\caption{Genus and time measurements (in seconds) for complete graphs.}
\label{tab:genus_data_Kn}
\end{table}

\begin{table}[H]
\centering
\footnotesize
\begin{tabular}{|l|c|c|c|c|c|c|c|}
\hline
 k   & g   & v    & e     & genus    & \textsc{PAGE} (s) & \textsc{SAGEMath} (s) & \textsc{multi\_genus} (s)  \\
\hline
3	 & 4   & 6	  & 9	  & 1	     & 0.005	         & 0.047	& 0.006 \\
4	 & 4   & 8	  & 16	  & 1	     & 0.005	         & 0.010	& 0.010 \\
5	 & 4   & 10	  & 25	  & 3	     & 0.014	         & DNF	    & 0.008 \\
6	 & 4   & 12	  & 36	  & 4	     & 0.011	         & DNF	    & 0.009 \\
\hline
\end{tabular}
\caption{Genus and time measurements (in seconds) for complete bipartite graphs.}
\label{tab:genus_data_Knn}
\end{table}

\bibliographystyle{amsplain}
\bibliography{bibliography.bib}

@article{archdeacon1981,
author = {Archdeacon, Dan},
title = {A kuratowski theorem for the projective plane},
journal = {Journal of Graph Theory},
volume = {5},
number = {3},
pages = {243-246},
doi = {10.1002/jgt.3190050305},
url = {https://onlinelibrary.wiley.com/doi/abs/10.1002/jgt.3190050305},
year = {1981}
}

@inproceedings{beyer2016,
author = {Beyer, Stephan and Chimani, Markus and Hedtke, Ivo and Kotrb\v{c}\'{\i}k, Michal},
editor="Goldberg, Andrew V.
and Kulikov, Alexander S.",
title="A Practical Method for the Minimum Genus of a Graph: Models and Experiments",
booktitle="Experimental Algorithms",
year="2016",
publisher="Springer International Publishing",
address="Cham",
pages="75--88",
isbn="978-3-319-38851-9",
url = {https://doi.org/10.1007/978-3-319-38851-9_6},
doi = {10.1007/978-3-319-38851-9_6},
}

@article{brinkmann2022,
author = {Brinkmann, Gunnar},
year = {2022},
month = {07},
title = {A practical algorithm for the computation of the genus},
volume = {22},
number = {4},
journal = {Ars Mathematica Contemporanea},
doi = {10.26493/1855-3974.2320.c2d}
}

@misc{chambers2002,
  author  = {Chambers, John},
  title   = {Hunting for Torus Obstructions},
  year    = {2002},
  url     = {https://dspace.library.uvic.ca/items/760d538c-023d-45ff-8d85-57fabd1cd858},
  urldate = {2025-06-17}
}

@InProceedings{chimani2019,
  author =	{Chimani, Markus and Wiedera, Tilo},
  title =	{{Stronger ILPs for the Graph Genus Problem}},
  booktitle =	{27th Annual European Symposium on Algorithms (ESA 2019)},
  pages =	{30:1--30:15},
  year =	{2019},
  volume =	{144},
  publisher =	{Schloss Dagstuhl -- Leibniz-Zentrum f{\"u}r Informatik},
  address =	{Dagstuhl, Germany},
  URL =		{https://drops.dagstuhl.de/entities/document/10.4230/LIPIcs.ESA.2019.30},
  doi =		{10.4230/LIPIcs.ESA.2019.30},
}

@article{conder2015,
author = {Conder, Marston and Grande, Ricardo},
year = {2015},
month = {05},
title = {On Embeddings of Circulant Graphs},
volume = {22},
number = {2},
journal = {Electronic Journal of Combinatorics},
doi = {10.37236/4013}
}

@misc{euler1758,
  author       = {Euler, Leonhard},
  title        = {Elementa doctrinae solidorum},
  year         = {1758},
  howpublished = {\textit{Euler Archive - All Works}, no. 230},
  url          = {https://scholarlycommons.pacific.edu/euler-works/230/?utm_source=scholarlycommons.pacific.edu%2Feuler-works%2F230&utm_medium=PDF&utm_campaign=PDFCoverPages}
}

@article{heffter1891,
  author    = {Lothar Heffter},
  title     = {Ueber das Problem der Nachbargebiete},
  journal   = {Math. Ann.},
  year      = {1891},
  volume    = {38},
  pages     = {477--508},
  doi       = {10.1007/BF01203357},
  url       = {https://doi.org/10.1007/BF01203357},
}

@article{jungerman1978,
author = {Jungerman, Mark and Ringel, Gerhard},
title = {The genus of the n-octahedron: Regular cases},
journal = {Journal of Graph Theory},
volume = {2},
number = {1},
pages = {69-75},
doi = {https://doi.org/10.1002/jgt.3190020109},
url = {https://onlinelibrary.wiley.com/doi/abs/10.1002/jgt.3190020109},
year = {1978}
}

@article{kuratowski1930,
author = {Kuratowski, Casimir},
journal = {Fundamenta Mathematicae},
number = {1},
pages = {271-283},
title = {Sur le problème des courbes gauches en Topologie},
url = {http://eudml.org/doc/212352},
volume = {15},
year = {1930},
issn = {0016-2736}
}

@inproceedings{liu2006,
  author={Hongbo Liu and Jiaxin Wang},
  booktitle={Advanced Int'l Conference on Telecommunications and Int'l Conference on Internet and Web Applications and Services (AICT-ICIW'06)}, 
  title={A new way to enumerate cycles in graph}, 
  year={2006},
  pages={57-57},
  doi={10.1109/AICT-ICIW.2006.22}
}

@article{gray2005,
title = {The genus of the GRAY graph is 7},
journal = {European Journal of Combinatorics},
volume = {26},
number = {3},
pages = {377-385},
year = {2005},
issn = {0195-6698},
doi = {https://doi.org/10.1016/j.ejc.2004.01.015},
url = {https://www.sciencedirect.com/science/article/pii/S0195669804000605},
author = {Dragan Marušič and Tomaž Pisanski and Steve Wilson},
}

@article{mohar1999,
author = {Mohar, Bojan},
title = {A Linear Time Algorithm for Embedding Graphs in an Arbitrary Surface},
journal = {SIAM Journal on Discrete Mathematics},
volume = {12},
number = {1},
pages = {6-26},
year = {1999},
doi = {10.1137/S089548019529248X},
URL = {https://doi.org/10.1137/S089548019529248X},
}

@book{mohar2001graphs,
  author    = {Mohar, Bojan and Thomassen, Carsten},
  title     = {Graphs on Surfaces},
  publisher = {Johns Hopkins University Press},
  year      = {2001}
}

@article{myrvold2011,
title = {Errors in graph embedding algorithms},
journal = {Journal of Computer and System Sciences},
volume = {77},
number = {2},
pages = {430-438},
year = {2011},
issn = {0022-0000},
doi = {https://doi.org/10.1016/j.jcss.2010.06.002},
url = {https://www.sciencedirect.com/science/article/pii/S0022000010000863},
author = {Wendy Myrvold and William Kocay},
keywords = {Graph embedding, Torus, Graph genus, Algorithm},
}

@article{myrvold2018,
author = {Myrvold, Wendy and Woodcock, Jennifer},
year = {2018},
title = {A Large Set of Torus Obstructions and How They Were Discovered},
volume = {25},
number = {1},
journal = {Electronic Journal of Combinatorics},
doi = {10.37236/3797}
}

@article{ringel1954,
  author    = {Gerhard Ringel},
  title     = {Bestimmung der Maximalzahl der Nachbargebiete auf nichtorientierbaren Flächen},
  journal   = {Math. Ann.},
  year      = {1954},
  volume    = {127},
  pages     = {181--214},
  doi       = {10.1007/BF01361120},
  url       = {https://doi.org/10.1007/BF01361120},
}

@article{ringel1965,
  author    = {Gerhard{} Ringel},
  title     = {Das Geschlecht des vollständigen paaren Graphen},
  journal   = {Abh.Math.Semin.Univ.Hambg.},
  year      = {1965},
  volume    = {28},
  pages     = {139--150},
  doi       = {10.1007/BF02993245},
  url       = {https://doi.org/10.1007/BF02993245},
}

@article{ringel1968,
author = {Gerhard Ringel and John William Theodore Youngs},
title = {SOLUTION OF THE HEAWOOD MAP-COLORING PROBLEM$^{*}$},
journal = {Proceedings of the National Academy of Sciences},
volume = {60},
number = {2},
pages = {438-445},
year = {1968},
doi = {10.1073/pnas.60.2.438},
URL = {https://www.pnas.org/doi/abs/10.1073/pnas.60.2.438},
}

@article{robertson2004,
title = {Graph Minors. XX. Wagner's conjecture},
journal = {Journal of Combinatorial Theory, Series B},
volume = {92},
number = {2},
pages = {325-357},
year = {2004},
issn = {0095-8956},
doi = {https://doi.org/10.1016/j.jctb.2004.08.001},
url = {https://www.sciencedirect.com/science/article/pii/S0095895604000784},
author = {Neil Robertson and Paul Seymour},
}

@article{robertson1995,
title = {Graph Minors .XIII. The Disjoint Paths Problem},
journal = {Journal of Combinatorial Theory, Series B},
volume = {63},
number = {1},
pages = {65-110},
year = {1995},
issn = {0095-8956},
doi = {https://doi.org/10.1006/jctb.1995.1006},
url = {https://www.sciencedirect.com/science/article/pii/S0095895685710064},
author = {Neil Robertson and Paul Seymour{}},
}

@article{thomassen1989,
title = {The graph genus problem is NP-complete},
journal = {Journal of Algorithms},
volume = {10},
number = {4},
pages = {568-576},
year = {1989},
issn = {0196-6774},
doi = {https://doi.org/10.1016/0196-6774(89)90006-0},
url = {https://www.sciencedirect.com/science/article/pii/0196677489900060},
author = {Carsten Thomassen},
}

@misc{weisstein:hsgraph:2024a,
  author       = {Weisstein, Eric},
  title        = {Higman--Sims Graph},
  year         = {2024a},
  howpublished = {\url{https://mathworld.wolfram.com/Higman-SimsGraph.html}},
  note         = {From MathWorld -- A Wolfram Web Resource}
}

@misc{weisstein:cyclograph:2024b,
  author       = {Weisstein, Eric{}},
  title        = {Cyclotomic Graph},
  year         = {2024b},
  howpublished = {\url{https://mathworld.wolfram.com/CyclotomicGraph.html}},
  note         = {From MathWorld -- A Wolfram Web Resource}
}

@misc{weisstein:bipkneser:2024c,
  author       = {Weisstein, Eric{}{}},
  title        = {Bipartite Kneser Graph},
  year         = {2024c},
  howpublished = {\url{https://mathworld.wolfram.com/BipartiteKneserGraph.html}},
  note         = {From MathWorld -- A Wolfram Web Resource}
}

@misc{weisstein:johnson:2024d,
  author       = {Weisstein, Eric{}{}{}},
  title        = {Johnson Graph},
  year         = {2024d},
  howpublished = {\url{https://mathworld.wolfram.com/JohnsonGraph.html}},
  note         = {From MathWorld -- A Wolfram Web Resource}
}

@article{Coxeter1958,
 ISSN = {00804630},
 URL = {http://www.jstor.org/stable/100667},
 author = {Harold Scott MacDonald Coxeter},
 journal = {Proceedings of the Royal Society of London. Series A, Mathematical and Physical Sciences},
 number = {1250},
 pages = {279--293},
 publisher = {The Royal Society},
 title = {Twelve Points in PG(5, 3) with 95040 Self-Transformations},
 urldate = {2025-06-22},
 volume = {247},
 year = {1958}
}

@misc{sagemath2024,
  author  = {SageMath Developers},
  title   = {Genus Calculation in SageMath},
  year    = {2024},
  url     = {https://github.com/sagemath/sage/blob/develop/src/sage/graphs/genus.pyx},
  urldate = {2024-08-18}
}

\end{document}